\documentclass[runningheads]{llncs}

\usepackage{amsmath, amssymb} 
\usepackage{textgreek}
\usepackage{color}
\usepackage{caption}
\usepackage{subcaption}
\usepackage[ruled,vlined,linesnumbered]{algorithm2e}
\usepackage{enumitem}
\usepackage{multirow}
\usepackage{float}

\DeclareMathOperator{\obs}{obs}


%
\usepackage{graphicx}
%

\begin{document}
\title{Interpretable exact linear reductions\\ via positivity\thanks{Supported by the Paris Ile-de-France region. GP was partially supported by NSF grants DMS-1853482, DMS-1760448, DMS-1853650, CCF-1564132, and CCF-1563942.}}
%
%
\author{Gleb Pogudin\inst{1} \and
Xingjian Zhang\inst{2}}
\authorrunning{G. Pogudin and X. Zhang}
\institute{LIX, CNRS, \'Ecole Polytechnique, Institute Polytechnique de Paris, Palaiseau, France \email{gleb.pogudin@polytechnique.edu} \and
\'Ecole Polytechnique, Institute Polytechnique de Paris, Palaiseau, France \email{xingjian.zhang@polytechnique.edu}}
\maketitle              
\begin{abstract}

Kinetic models of biochemical systems used in the modern literature often contain hundreds or even thousands of variables.
While these models are convenient for detailed simulations, their size is often an obstacle to deriving mechanistic insights.
One way to address this issue is to perform an exact model reduction by finding a self-consistent lower-dimensional projection of the corresponding dynamical system.

Recently, a new algorithm CLUE~\cite{Ovchinnikov2021} has been designed and implemented, which allows one to construct an exact linear reduction of the smallest possible dimension such that the fixed variables of interest are preserved.
It turned out that allowing arbitrary linear combinations (as opposed to zero-one combinations used in the prior approaches) may yield a much smaller reduction.
However, there was a drawback: some of the new variables did not have clear physical meaning, thus making the reduced model harder to interpret.

We design and implement an algorithm that, given an exact linear reduction, re-parametrizes it by performing an invertible transformation of the new coordinates to improve the interpretability of the new variables.
We apply our algorithm to three case studies and show that ``uninterpretable'' variables disappear entirely in all the case studies.
	    
The implementation of the algorithm and the files for the case studies are available at \url{https://github.com/xjzhaang/LumpingPostiviser}.

\keywords{exact reduction (lumping) \and ODE model \and interpretability.}

\end{abstract}
\section{Introduction}

Dynamical models described by systems of polynomial ordinary differential equations (PODEs) are frequently used in systems biology and life sciences in general.
One of the major classes of such models is the dynamical models of chemical reaction networks (CRN) under the mass-action kinetics in which each indeterminate corresponds to the concentration of one of the chemical species.
Models appearing in the literature often consist of hundreds or thousands of variables.
While the models of this size can incorporate a substantial amount of information about the phenomena of interest, it is often hard to use them to derive mechanistic insights.

One way to address these challenges is to use \emph{model reduction} algorithms that replace a model with a simpler one while preserving, at least approximately, some of the features of the original model.
A wide range of methods has been developed for approximate model reduction, including methods based on singular value decomposition~\cite{antoulas} and time-scale separation~\cite{okino1998}.

A complementary approach is to perform~\emph{exact model reduction}, that is, lower the dimension of the model without introducing approximation errors.
For example, exact linear lumping aims at writing a self-consistent system of differential equations for a set of \emph{macro-variables} in which each macro-variable is a linear combination of the original variables.
For important classes of biochemical models, specialized lumping criteria have been developed (see, e.g., ~\cite{Borisov,Conzelmann,Feret}), allowing the construction of macro-variables as sums of some of the original variables (that is, allowing only coefficients zero and one in the linear combinations).
A general lumping algorithm has been proposed in~\cite{Cardelli2017,erode} which is applicable to any system of PODEs (not necessarily arising from a CRN).
This algorithm partitions the original variables so that the macro-variables can be the sums of the variables within the blocks in the partition.
Note that the macro-variables are zero-one linear combinations of the original variables in all these cases.

In~\cite{Ovchinnikov2021}, an algorithm has been designed (and the corresponding software called \emph{CLUE} presented) that, for a given set of linear forms in the state variables (the \emph{observables}), constructs a linear lumping of the smallest possible dimension such that the observables can be written as combinations of the macro-variables (i.e., the observables are preserved).
Unlike the earlier approaches, the macro-variables produced by CLUE may involve any coefficients, and this allowed to produce reductions of lower dimensions than it was possible before, see~\cite[Table~1]{Ovchinnikov2021}.
However, there was a price to pay for this flexibility: the authors state that some of the produced macro-variables ``escape physical intelligibility'' (see~\cite[Section~4.2]{Ovchinnikov2021}).
Indeed, the resulting reduction of the smallest dimension is uniquely defined up to a linear change of the coordinates, so the coordinates in the reduced state space chosen by CLUE could be not optimal in the sense of interpretability.

In this paper, we propose a post-processing step that takes an exact linear lumping (not necessarily produced by CLUE) and attempts to improve its interpretability by performing a change of variables.
It has been observed in~\cite{Ovchinnikov2021} that one of the sources of difficulties for interpretation is the negative coefficients in the macro-variables.
We design and implement an algorithm that finds (if possible) a linear change of variables in the reduced model so that
\begin{enumerate}
  \item the coefficients of the representations of the new macro-variables in terms of the original state variables are nonnegative

  \item and the total number of nonzero coefficients in these representations is as small as possible.
\end{enumerate}
Note that interpretability is not a formal mathematical property, and the conditions above is one possible formalization of the notion of a ``more interpretable reduction''.
We do not claim that it is universal (e.g., a difference of two state variables may represent a potential), but we claim that it is useful.
To support this claim, we demonstrate the efficiency of our approach on three case studies from the literature.
Two of these cases are exactly the case studies from~\cite{Ovchinnikov2021} in which issues with interpretability occur.
We show that our method provides interpretable re-parametrizations of the optimal lumpings computed by CLUE in all three case studies.
Our algorithm uses tools from convex discrete geometry and matroid theory.

%
%
%
\section{Methods}

\subsection{Preliminaries on lumping}

\begin{definition}[Lumping]\label{def:lumping}
Consider a system of ODEs of the form
  \begin{equation}\label{eq:main}
    \mathbf{x}' = \mathbf{f}(\mathbf{x}),
  \end{equation}
  where $\mathbf{x} = (x_1, \ldots, x_n)^T$, $\mathbf{f} = (f_1, \ldots, f_n)^T$, and $f_1, \ldots, f_n \in \mathbb{R}[\mathbf{x}]$.
  A linear transformation $\mathbf{y} = L\mathbf{x}$ with $\mathbf{y} = (y_1, \ldots, y_m)^T$, $L \in \mathbb{R}^{m \times n}$, and $\operatorname{rank}L = m$ is called \emph{a lumping of~\eqref{eq:main}} if there exist polynomials $g_1, \ldots, g_m \in \mathbb{R}[\mathbf{y}]$ such that 
  \[
    \mathbf{y}' = \mathbf{g}(\mathbf{y}), \quad\text{ where }\quad \mathbf{g} = (g_1, \ldots, g_m)^T
  \]
  for every solution $\mathbf{x}$ of~\eqref{eq:main}. We say that $m$ is \emph{the dimension of the lumping}. The variables $\mathbf{y}$ in the reduced system are called \emph{macro-variables}. 
  We will call a macro-variable \emph{nontrivial} if it is not proportional to one of the original variables.
\end{definition}

\begin{remark}
  An ODE system may have many lumpings, some of them may be less useful than others.
  For example, if $m = n$, then the lumping is just an invertible change of variables, so no reduction happens.
  Another special case is when the rows of $L$ contain the coefficients of linear first integrals of the system.
  In this case, the reduced ODE will be of the form $\mathbf{y}' = 0$.
  
  \emph{Constrained linear lumping} introduced in Definition~\ref{def:cll} requires to preserve the dynamics of the variables of interest, and this is one of the ways to say that reduction is not ``too coarse''.
\end{remark}

The following example is a substantially simplified version of the case study from Section~\ref{subsec:case1} (see also~\cite{Gunawardena2005}).

\begin{example}\label{ex:lumping}
  We will consider a chemical reaction network consisting of 
  \begin{itemize}
      \item A chemical species $X$.
      \item Species $A_{UU}$, $A_{UX}$, $A_{XU}$, and $A_{XX}$.
       Each of them is one of the states of a molecule $A$ with two identical binding sites, which can be either unbound (U in the subscript) or bound (X in the subscript) to $X$.
  \end{itemize}
  For simplicity, we will assume that all the reaction rates are equal to one.
  The dynamics of the network is defined by the following reactions ($\ast$ denotes any of $X$ and $U$):
  \begin{equation}\label{eq:ex_reactions}
    X + A_{U\ast} \rightleftharpoons A_{X\ast}, \qquad X + A_{\ast U} \rightleftharpoons A_{\ast X}.
  \end{equation}
  Under the laws of the mass-action kinetics, the reactions~\eqref{eq:ex_reactions} yield the following ODE system (where $[S]$ denotes the concentration of the species $S$):
  \begin{equation}\label{eq:ex_ode}
      \begin{cases}
        [X]' = [A_{XU}] + [A_{UX}] + 2[A_{XX}] - [X] ([A_{XU}] + [A_{UX}] + 2[A_{UU}]),\\
        [A_{UU}]' = [A_{XU}] + [A_{UX}] - 2[X][A_{UU}],\\
        [A_{XU}]' = [A_{XX}] + [X][A_{UU}] - [X][A_{XU}] - [A_{XU}],\\
        [A_{UX}]' = [A_{XX}] + [X][A_{UU}] - [X][A_{UX}] - [A_{UX}],\\
        [A_{XX}]' = [X][A_{XU}] + [X][A_{UX}] - 2[A_{XX}].
      \end{cases}
  \end{equation}
  We will show that the following matrix $L$ and the macro-variables $y_1, y_2, y_3$
  \begin{equation}\label{eq:ex_lumping}
    L = \begin{pmatrix}
      1 & 0 & 0 & 0 & 0\\
      0 & 0 & 1 & 1 & 2\\
      0 & 2 & 1 & 1 & 0
    \end{pmatrix}
    \implies
    \begin{cases}
       y_1 = [X],\\
       y_2 = [A_{XU}] + [A_{UX}] + 2[A_{XX}],\\
       y_3 = 2[A_{UU}] + [A_{XU}] + [A_{UX}].
    \end{cases}
  \end{equation}
  yield a lumping of the system~\eqref{eq:ex_reactions}.
  Indeed, a direct calculation shows that
  \begin{equation}\label{eq:ex_result}
  \begin{cases}
     y_1' = [X]' = [A_{XU}] + [A_{UX}] + 2[A_{XX}] - [X] ([A_{XU}] + [A_{UX}] + 2[A_{UU}]) = y_2 - y_1y_3,\\
     y_2' = [A_{XU}]' + [A_{UX}]' + 2[A_{XX}]' = -y_2 + y_1y_3,\\
     y_3' = 2[A_{UU}]' + [A_{XU}]' + [A_{UX}]' = y_3 - y_1y_2.
  \end{cases}
  \end{equation}
  Since each reaction involves only one binding site, this lumping can be interpreted as follows: $y_2$ is the total ``concentration'' of the bound sites, and $y_3$ is the total ``concentration'' of the unbound sites (see also Section~\ref{subsec:case1}).
\end{example}

The lumping matrix $L$ in the example above turns out to exactly preserve the concentration $[X]$.
In general, one may fix a vector $\mathbf{x}_{\obs}$ of combinations of the original variables that are to be recovered in the reduced system.

\begin{definition}[Constrained linear lumping]\label{def:cll}
Let $\mathbf{x}_{\obs}$ be a vector of linearly independent forms in $\mathbf{x}$ such that  $\mathbf{x}_{\obs} = A\mathbf{x}$. 
Then we say that a lumping $\mathbf{y} = L\mathbf{x}$ is a \emph{constrained linear lumping} with observables $\mathbf{x}_{\obs}$
if each entry of $\mathbf{x}_{\obs}$ is a linear combination of the entries of $\mathbf{y}$. 
\end{definition}

\subsection{The nonuniqueness/interpretability issue}

A recent software CLUE~\cite{Ovchinnikov2021} allows to find, for a given system~\eqref{eq:main} and a vector $\mathbf{x}_{\obs}$, a constrained linear lumping of the smallest possible dimension.
However, such an optimal lumping is not unique in the following sense: if $\mathbf{y}_1 = L\mathbf{x}$ is a constrained linear lumping of the smallest possible dimension, then, for every invertible matrix $T$ of the appropriate dimension, $\mathbf{y}_2 = TL\mathbf{x}$ is also such a lumping. 
Two such lumpings will be called \emph{equivalent}, and one can show that all constrained linear lumpings of the smallest possible dimension are equivalent.

Because of this nonuniqueness, the lumping produced by CLUE will be optimal in terms of the dimension but not necessarily optimal in terms of the \emph{interpretability} of the resulting macro-variables.
For example, the macro-variables constructed by CLUE for the system~\eqref{eq:ex_ode} are:
\[
y_1 = [X],\quad y_2 = [A_{XU}] + [A_{UX}] + 2[A_{XX}],\quad y_3 = [A_{UU}] - [A_{XX}].
\]
The last macro-variable is different from the one in~\eqref{eq:ex_lumping} and does not allow for the ``concentration-of-sites'' interpretation.
Moreover, the reduced ODE system is more complicated than~\eqref{eq:ex_result}.
This issue becomes more serious in more realistic (and larger) models: for the case studies in~\cite[Section~4.2]{Ovchinnikov2021} it has been observed that some of the resulting macro-variables ``escaped physical intelligibility''.


\subsection{Our approach via nonnegativity}
\label{sec:our_approach}

It has been already observed in~\cite[Section~4.2]{Ovchinnikov2021} that the macro-variables involving negative coefficients (such as $[A_{UU}] - [A_{XX}]$) may be an obstacle for interpretability.
This is partially because such quantities cannot be naturally viewed as concentrations of some sort since they may take on negative values.

Thus, in order to improve the interpretability of a lumping, we construct an equivalent lumping with all the coefficients being nonnegative and the number of nonzero coefficients (that is, the $\ell_0$-norm $\lVert \cdot \rVert_0$) being the smallest possible under the nonnegativity constraint.
Mathematically, for a given lumping $\mathbf{y}_1 = L\mathbf{x}$, we find (if possible) an equivalent lumping $\mathbf{y}_2 = TL\mathbf{x}$ with invertible $T$ satisfying:
\begin{enumerate}
    \item the entries of $TL$ are nonnegative and
    \item $\lVert TL\rVert_0$ is as small as possible.
\end{enumerate}
As we have mentioned, for fixed observables, all the constrained linear lumpings of the smallest dimension are equivalent, so the value $\lVert TL \rVert_0$ does not depend on the choice of $L$ in the case of the optimal constrained linear lumping as in~\cite{Ovchinnikov2021}.

We hypothesize that the new lumping $\mathbf{y}_2 = TL\mathbf{x}$ will be typically more interpretable than the original one.
We support this hypothesis by three case studies: multisite protein phosphorylation~\cite{Sneddon2010}, Fc$\epsilon$-RI signaling pathways~\cite{Faeder2003}, and Jak-family protein tyrosine kinase activation~\cite{Barua2009}.
The first two are exactly the case studies from~\cite{Ovchinnikov2021} for which some of the macro-variables could not be properly interpreted by the authors.

\subsection{Algorithmic details}

In this section, we provide and justify Algorithm~\ref{alg:main}, an algorithm for computing a new lumping described in Section~\ref{sec:our_approach}.
We will use some basic terminology from convex geometry. We refer the reader to~\cite[Chapters 7-8]{convex} for details.
Throughout the rest of the section, for $A$ being a vector or a matrix, $\lVert A \rVert_0$ denotes the $\ell_0$-norm of $A$, that is, the number of nonzero entries in $A$.

{\small
\begin{algorithm}[H]
\caption{Algorithm for constructing new lumping}\label{alg:main}

\begin{description}
  \item[Input] a $m \times n$ matrix $L$ with entries in $\mathbb{R}$ and linearly independent rows;
  
  \item[Output] an invertible $m\times m$ matrix $T$ such that 
  \begin{itemize}
      \item the entries of $TL$ are nonnegative
      \item and the number of the nonzero entries is as small as possible.
  \end{itemize}
  Returns NO if such matrix $T$ does not exist.
\end{description}

\begin{enumerate}[label = \textbf{(Step~\arabic*)}, leftmargin=*, align=left, labelsep=2pt, itemsep=0pt]
    \item Consider the row space of $L$ and the nonegative orthant in $(\mathbb{R}_{\geqslant 0})^n$ as polyhedral cones $C_1$ and $C_2$ in $\mathbb{R}^n$.
    \item\label{step:intersection} Compute a polyhedral cone $C = C_1 \cap C_2$.
    This can be done, for example, using the Fourier-Motzkin algorithm~\cite[Section~1.2]{Ziegler}.
    \item If $\dim C < m$, return NO
    \item\label{step:producing} Let $E$ be a set of representatives of the extreme rays of $C$.
    \item Initialize a $0\times n$ matrix $L_1$
    \item\label{step:greedy} While $E \neq \varnothing$
    \begin{enumerate}
        \item\label{step:choice} choose $e \in E$ such that $\lVert e\rVert_0 = \min_{v \in E} \lVert v\rVert_0$;
        \item if $e$ is not in the row space of $L_1$, append $e$ to $L_1$ as a new row;
        \item remove $e$ from $E$.
    \end{enumerate}
    \item\label{step:conctruction} Construct an $m \times m$ matrix $T$ such that the $i$-th column contains the coordinates of the $i$-th row of $L_1$ with respect to the rows of $L$.
\end{enumerate}
\end{algorithm}
}
\begin{remark}[Implementation]
   Our implementation of Algorithm~\ref{alg:main} in Julia can be found at~\url{https://github.com/xjzhaang/LumpingPostiviser}.
   We used polymake~\cite{Gawrilow2000} for operations with cones (at \ref{step:intersection} and \ref{step:producing}) and Nemo~\cite{nemo} for symbolic linear algebra (at~\ref{step:greedy}).
   Table~\ref{table:runtimes} below summarizes the performance of the code on the case studies we discuss in this paper.
   We also provide timing for obtaining the starting reduction using CLUE. Therefore, the sum of the last two columns is the total time to obtain the final reduction for the original system.
   The runtimes are measured on a laptop with a 2.20GHz CPU and 16GB RAM using \texttt{@btime} macro in Julia.
   One can see that the models with hundreds of equations can be tackled in less than a minute on a commodity hardware.
   {\small
   \begin{table}[!htbp]
       \centering
       {\renewcommand{\arraystretch}{1.2}
       \begin{tabular}{|c|c|c|c|c|}
       \hline
            \multirow{2}{*}{Model} & \multirow{2}{*}{\# original variables ($n$)} & \multirow{2}{*}{\# macro-variables ($m$)} & \multicolumn{2}{c|}{Runtime (sec.)} \\
            \cline{4-5}
            & & & CLUE & Algorithm~\ref{alg:main}\\
            \hline
            Section~\ref{subsec:case1}, $m = 2$ & $18$ & $6$ & $< 0.01$ & $< 0.01$\\ 
            \hline
            Section~\ref{subsec:case1}, $m = 3$ & $66$ & $6$ &$ < 0.01$  &$< 0.01$\\
            \hline
            Section~\ref{subsec:case1}, $m = 4$ & $258$ & $6$ & $0.34$ & $3.4$\\ 
            \hline
            Section~\ref{subsec:fceri} & $354$ & $69$ & $3.3 $&$4.7$\\
            \hline
            Section~\ref{subsec:barua} & $470$ & $322$ & $72$ & $49$\\
            \hline
       \end{tabular}}
       \caption{Running times of our implementation. }
       \label{table:runtimes}
   \end{table}}
\end{remark}

\vspace{-12mm}
\begin{remark}[Choice at~\ref{step:choice}]
   At the~\ref{step:choice}, if there are several $e \in E$ with $\lVert e \rVert_{0}$ being minimal possible, we choose the one with the index of the leftmost nonzero entry being the smallest one.
   In our experience, this makes the results slightly easier to analyze.
\end{remark}

\begin{remark}[Returning NO]\label{rem:NO}
  Although Algorithm~\ref{alg:main} may, in principle, return NO, we did not encounter such a situation with models from the literature. 
  We give an artificial example with this property in Appendix.
\end{remark}

\begin{theorem}[Correctness of Algorithm~\ref{alg:main}]
  For every matrix $L$ over $\mathbb{R}$ with linearly independent rows, Algorithm~\ref{alg:main} produces an invertible square matrix $T$ such that
  \begin{itemize}[topsep=1pt]
      \item $TL$ has nonnegative entries
      \item and the number of nonzero entries in $TL$ is the smallest possible under the nonnegativity constraint
  \end{itemize}
  if such $T$ exists and returns NO if there is no such $T$.
\end{theorem}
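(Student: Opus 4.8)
The plan is to reduce the optimization to a minimum-weight basis problem in a linear matroid, exploiting the fact that the feasible vectors all lie in a cone inside the nonnegative orthant so that no cancellation of signs can occur. Throughout, let $R$ denote the row space of $L$, so that $C_1 = R$, $C_2 = (\mathbb{R}_{\geqslant 0})^n$, and $C = R \cap (\mathbb{R}_{\geqslant 0})^n$. Since $T$ is invertible, the rows of $TL$ always form a basis of $R$; conversely, any basis of $R$ consisting of nonnegative vectors arises as the rows of $TL$ for a unique invertible $T$, namely the change-of-basis matrix built in Step~7. Thus the task is equivalent to the following: among all bases of $R$ contained in $C$, find one minimizing the total number of nonzero entries.

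First I would settle existence, justifying Step~3. Such a basis exists if and only if $C$ contains $m$ linearly independent vectors, i.e. if and only if $\dim C = m$ (as $C \subseteq R$ and $\dim R = m$): if a valid $T$ exists, its rows give $m$ independent vectors in $C$; conversely $\dim C = m$ forces $C$ to span $R$, so a basis can be extracted. Hence returning NO exactly when $\dim C < m$ is correct. Assuming $\dim C = m$, I would note that $C$ is pointed (if $v, -v \in C \subseteq (\mathbb{R}_{\geqslant 0})^n$ then $v = 0$), so by Minkowski--Weyl $C$ is the conic hull of its extreme rays, and since $\dim C = m$ these rays span $R$; in particular the set $E$ of representatives contains a basis of $R$, so the greedy loop will succeed in selecting one.

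The crux is the following nonnegativity lemma: if $v \in C$ is written as a nonnegative combination $v = \sum_i c_i e_i$ of extreme-ray representatives, then for every $i$ with $c_i > 0$ we have $\operatorname{supp}(e_i) \subseteq \operatorname{supp}(v)$, and hence $\lVert e_i \rVert_0 \leqslant \lVert v \rVert_0$. This holds precisely because all the $e_i$ and $v$ are nonnegative, so supports add without cancellation. Using it I would prove the key reduction (Lemma~A): every minimum-$\ell_0$ nonnegative basis can be taken to consist of extreme rays. Indeed, take an optimal basis; if some member $v_j$ is not an extreme ray, decompose it as above, let $H$ be the span of the remaining members, and pick some $e_i$ (with $c_i>0$) not lying in $H$ --- one exists, since otherwise $v_j \in H$. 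Replacing $v_j$ by $e_i$ keeps a basis (it is $m$ independent vectors of $R$), keeps nonnegativity, and does not increase the total $\ell_0$-norm by the lemma; as it strictly increases the count of extreme-ray members, iterating terminates in at most $m$ steps with an optimal basis inside $E$.

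It then remains to show the greedy loop of Step~6 returns a minimum-weight basis of the linear matroid on the ground set $E$ with weights $\lVert e \rVert_0$, which is exactly the Rado--Edmonds theorem. Combining the pieces: the $E$-bases form a subset of all nonnegative bases, so the minimum total weight over $E$-bases is at least the global minimum, while Lemma~A shows the global minimum is attained by some $E$-basis; hence the two minima coincide and the greedy output is globally optimal. Finally, the rows of $L_1$ constitute such an optimal nonnegative basis, and Step~7 produces the change-of-basis matrix $T$ with $TL = L_1$, invertible because $L_1$ and $L$ are both bases of $R$. I expect the main obstacle to be Lemma~A, the reduction to extreme rays: this is where the nonnegativity hypothesis is genuinely used (a greedy search over arbitrary nonnegative vectors could otherwise fail to be optimal), whereas the matroid/greedy step is standard once the problem is phrased on $E$.
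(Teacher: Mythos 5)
Your proposal is correct, and it reaches the minimality claim by a genuinely different route than the paper. Both proofs share the same skeleton: the feasibility/NO part via $\dim C$, the key observation that nonnegative combinations cannot cancel (so any extreme ray appearing with a positive coefficient in the decomposition of $v \in C$ has support contained in $\operatorname{supp}(v)$, hence no larger $\ell_0$-norm), and the matroid greedy theorem to certify that Step~6 returns a minimum-weight basis among subsets of $E$. The difference is in how the global optimum is connected to $E$-bases. The paper takes an arbitrary competing nonnegative basis $r_1, \ldots, r_m$, forms the sets $E_i \subseteq E$ of extreme rays appearing in a conic decomposition of each $r_i$, and invokes the \emph{generalized Hall theorem} (Rado's theorem on independent transversals, with rank function $\mu(S) = \dim \operatorname{span} S$) to extract linearly independent $e_i \in E_i$, giving the term-by-term bound $\lVert e_i \rVert_0 \leqslant \lVert r_i \rVert_0$ and hence optimality of the greedy output directly, with no need for the optimum to be attained or for any iteration. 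Your Lemma~A instead runs an elementary basis-exchange argument: starting from an optimal nonnegative basis (attained since $\ell_0$ takes finitely many values), you repeatedly swap a non-extreme-ray member $v_j$ for an extreme ray $e_i \notin \operatorname{span}$ of the others, which preserves independence, nonnegativity, and optimality, and strictly increases the number of extreme-ray members, so it terminates. What each buys: your route is more self-contained, replacing Rado's transversal theorem by linear algebra (only the standard greedy theorem remains as imported machinery); the paper's route is non-iterative and compares any feasible basis to the greedy output in one shot, at the cost of invoking a second, less elementary matroid-theoretic result. Both are complete and correct.
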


\begin{proof}
  First, we will show that the algorithm returns NO if and only if there is no such matrix.
  Assume that there is such a matrix $T$.
  Then both $C_1$ and $C_2$ contain the rows of the matrix $TL$.
  Therefore, $C$ contains $m$ linearly independent vectors, so its dimension is at least $m$.
  In the other direction, if $\dim C \geqslant m$, then there exist $m$ linearly independent vectors in $C = C_1 \cap C_2$.
  Let $T$ be the matrix with the columns being their coordinates with respect to the rows of $L$.
  Then the rows of $TL$ will belong to $C_2$ so that they will be nonnegative.
  
  Now assume that the algorithm does not return NO.
  We observe that the entries of $L_1$ are nonnegative because all its rows belong to $C_2$.
  The rows of $L_1$ belong to $C_1$, so they are linear combinations of the rows of $L$.
  Since, by the construction on~\ref{step:greedy}, the rows of $L_1$ are linearly independent, and there are $\dim C$ of them, we conclude that the row spaces of $L_1$  and $L$ coincide.
  Therefore, the coordinates in~\ref{step:conctruction} are well-defined, so the algorithm will produce a matrix $T$ such that $L_1 = TL$ has only nonnegative entries.
  
  It remains to prove that the $\ell_0$-norm $\lVert L_1 \rVert_0$ of $L_1 = TL$ is the smallest possible.
  Consider any set $S$ of $m$ linearly independent elements of the set $E$ of representatives of the extreme rays of $C$.
  Since~\ref{step:conctruction} is a greedy algorithm on the linear matroid defined by $E$, \cite[(18)]{Edmonds1971} implies that 
  \begin{equation}\label{eq:l0norm}
      \lVert L_1 \rVert_0 \leqslant \sum\limits_{e \in S} \lVert e \rVert_0.
  \end{equation}
  Consider any invertible matrix $\widetilde{T}$ such that the entries of $\widetilde{L}_1 := \widetilde{T}L$ are nonnegative.
  Since the rows $r_1, \ldots, r_m$ of $\widetilde{L}_1$ belong to $C$, each of them can be represented as a nonnegative combination of the elements of $E$~\cite[\S 8.8]{convex}.
  For each $i = 1, \ldots, m$, we fix such a representation for $r_i$ and denote $E_i \subseteq E$ the set of elements of $E$ appearing in the representation with positive coefficients.
  We apply the generalized Hall's theorem~\cite[Theorem~1]{Hall} to the family $\mathcal{A} = \{E_1, \ldots, E_m\}$ of subsets of $E$ and the function $\mu$ such that $\mu(S)$ is defined to be the dimension of the linear span of the elements of $S$ for every $S \subseteq E$.
  This yields linearly independent elements $e_1, \ldots, e_m \in E$ such that $e_i \in E_i$ for every $i = 1,\ldots, m$.
  For every $i = 1, \ldots, m$, $r_i$ is a positive combination of $e_i$ and maybe some other elements of $E$, hence $\lVert e_i \rVert_0 \leqslant \lVert r_i \rVert_0$.
  Using~\eqref{eq:l0norm}, we have
  \[
    \lVert L_1 \rVert_0 \leqslant \sum\limits_{i = 1}^m \lVert e_i \rVert_0 \leqslant \sum\limits_{i = 1}^m \lVert r_i \rVert_0 = \lVert \widetilde{L}_1 \rVert_0,
  \]
  and this proves the minimality of the number of the nonzero entries in $L_1 = TL$ for $T$ constructed by the algorithm. 
\end{proof}

%
%

\section{Case Studies}
In this section, we demonstrate the improvements in physical intelligibility (while preserving the dimension) of reductions of biochemical models by our Algorithm~\ref{alg:main}. 
We analyse the results of the algorithm using models taken from the literature. 
We also compare the resulting reduction to the ones obtained by ERODE~\cite{erode} which are always defined by zero-one linear combinations.


    \subsection{Multisite protein phosphorylation}
    \label{subsec:case1}
    
	\subsubsection{Setup.} We consider a model of multisite phosphorylation \cite{Sneddon2010}.
	It describes a protein with $m$ identical and independent binding sites that simultaneously undergo phosphorylation and dephosphorylation.
	Each binding site can be in one of the four different states (see Figure~\ref{fig:multisite_setup}):
	\begin{enumerate}[topsep=1pt]
	    \item unphosphorylated and unbound, 
	    \item unphosphorylated and bound to a kinase,
	    \item phosphorylated and unbound,
	    \item phosphorylated and bound to a phosphatase.
	\end{enumerate}
    Therefore, there are $4^m + 2$ chemical species in the corresponding reaction network: free kinase and phosphatase, and $4^m$ states of the protein.
    
    \begin{figure}[H]
	\centering
    \includegraphics[width=0.8\textwidth]{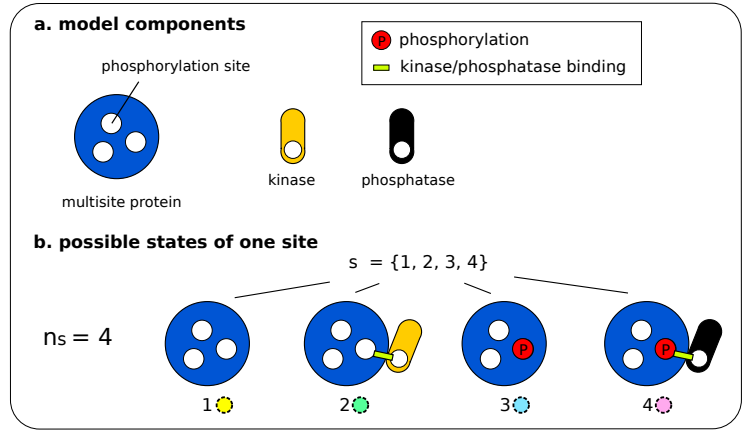}
    \caption{\small Molecular components and states of the multisite phosphorylation model. \\
    a. Consists of multisite proteins and kinases. This example has $m = 3$ sites.\\
    b. There are 4 possible states for a single site: unphosphorylated and unbound, unphosphorylated and bound to a kinase, phosphorylated and unbound, phosphorylated and bound to a phosphatase.}
    \label{fig:multisite_setup}
    \end{figure}
    
    \subsubsection{Reductions by ERODE and CLUE.} 
    In the reduction computed by ERODE~\cite{Cardelli2017} (for $m = 2, \ldots, 8$), the concentrations of protein configurations are replaced by the sums of the concentrations of configurations differing by a permutation of the sites.
    Therefore, the number of macro-variables is equal to $\binom{m + 3}{3} + 2$.
    
    In contrast, the analysis performed by CLUE~\cite{Ovchinnikov2021} always results in just six macro-variables.
    Two of them were always the concentrations of kinase and phosphatase as for ERODE.
    The other four were linear combinations with protein configurations.
    In~\cite[Section 4.2]{Ovchinnikov2021}, for $m = 2$, interpretation was provided for the first three of them. However, for the last one, it was remarked that ``the last macro-variable escaped physical intelligibility as it represents the difference between the free substrate with unphosphorylated sites and protein configurations that appear in the aforementioned lumps."
    
    \subsubsection{Our results.} We applied our algorithm to the cases $m = 2, 3, 4, 5$ and obtained new macro-variables, which have again included the concentrations of free kinase and phosphatase.
    Moreover, the three interpretable macro-variables from the analysis in~\cite{Ovchinnikov2021} for $m = 2$ are kept.
    Each of the four our macro-variables involving the protein configurations corresponds to a state of a site (e.g., unbounded and unphosphorylated), and each protein configuration appears with a coefficient equal to the number of sites in it with this state.
    Examples of these new macro-variables are given on Figure~\ref{fig:multisite_example} for $m = 2$ and $m = 3$.
    
    One way to interpret the result is that the constructed reduction replaces the concentration of the protein configurations with the ``concentrations'' of each of the four states of the sites (see also Example~\ref{ex:lumping}).
    From our interpretation, we expect that the models with larger $m$ will have a reduction of the same form.

    \begin{figure}[!htbp]
    \centering
    \includegraphics[width=0.9\textwidth]{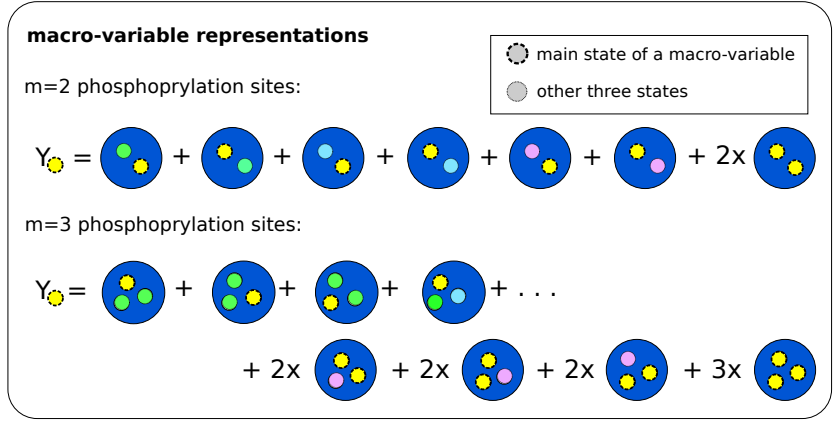}
    \caption{\small New macro-variables for $m = 2, 3$. 
    Each state of a binding site from Fig~\ref{fig:multisite_setup}-b can be the main state, yielding four macro-variables for each $m$. 
    The coefficients are equal to the number of binding sites in a protein that are in the main state.}
    \label{fig:multisite_example}
    \end{figure}


     \subsection{Fc$\epsilon$-RI signaling pathways}
    \label{subsec:fceri}
    
    \subsubsection{Setup.} 
    We consider a model for a different kind of multisite phosphorylation \cite{Faeder2003}, a model for the early events in the signaling pathway of the high-affinity IgE receptor (Fc$\epsilon$RI) in mast cells and basophils.
    
    The model details the rule-based interactions of Fc$\epsilon$RI receptor with a bivalent ligand (IgE dimer), the Src kinase Lyn, and the cytosolic protein tyrosine kinase Syk.
   The model is based on the following sequence of signaling events in Fc$\epsilon$RI \cite{Metzger2002,Nadler2001} (the reactions are nicely summarized on~\cite[Figure~2]{Faeder2003}):
    
    \begin{enumerate}[topsep=1pt]
        \item  binding of IgE ligand and Fc$\epsilon$RI which aggregates at the plasma membrane,
        \item transphosphorylation of tyrosine residues in the immunoreceptor tyrosine-based activation motifs (ITAMs) of the aggregated receptor by constitutively associated Lyn,
        \item recruitment of an extra Lyn/Syk kinase to the phosphorylated ITAM sites,
        \item transphosphorylation of Syk by Lyn and Syk on its linker region and activation loop, respectively.
    \end{enumerate}

    For visualizing different chemical species occurring in the resulting reaction network, we use the representation~\cite[Figure~1]{Faeder2003} summarized in Figure~\ref{fceri_setup}.  
    In total, there are $354$ of three types: monomers, dimers, and non-receptor states (free ligand/Lyn and Syk in each of 4 possible states of phosphorylation).
    
    \begin{figure}[H]
    \centering
    \includegraphics[width=0.7\textwidth]{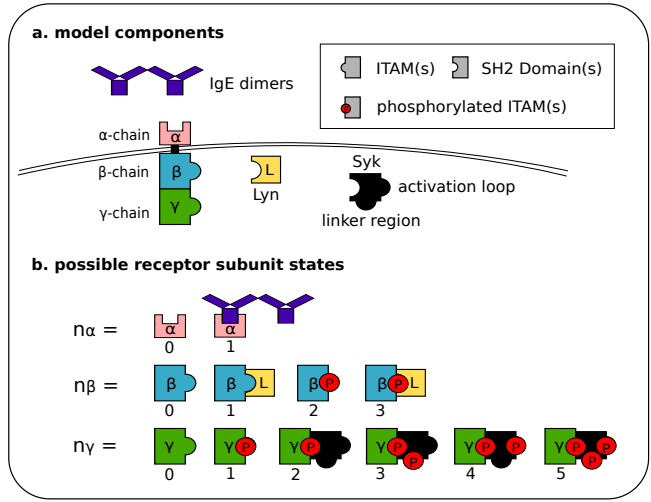}
    \caption{\small Molecular components and states of the Fc$\epsilon$RI signaling events model.\\
    a. IgE dimer is a bivalent ligand. Fc$\epsilon$RI consists of $\alpha$, $\beta$, $\gamma$ subunits. 
    Lyn kinase has an SH2 domain. Syk kinase has an SH2 domain and two ITAM sites which differ by the method of phosphorylation: Lyn at the linker region, and Syk at the activation loop.\\
    b. The $\alpha$ subunit can be unbound or bound to a ligand. 
    $\beta$ can be unphosphorylated/phosphorylated, with/without associated Lyn. 
    $\gamma$ can be unphosphorylated/phosphorylated, and the phosphorylated form binds to Syk in any of the four states of phosphorylation.}
    \label{fceri_setup}
    \end{figure}
    
    \begin{figure}[!htbp]
    \centering
    \includegraphics[width=0.8\textwidth]{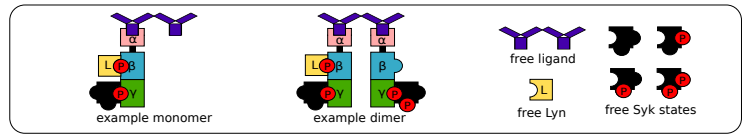}
    \caption{\small Examples of a monomer, a dimer, and the free components.}
    \label{fig7}
    \end{figure}	
    
    \vspace{-5mm}
    \subsubsection{Reductions by ERODE and CLUE.} 
    The reduction by ERODE~\cite{Cardelli2017} consists of $105$ macro-variables, where all the complexes with the same configuration except for the phosphorylation state of the Syk units are summed up in a single macro-variable.
    We will refer to these macro-variables as \emph{Syk-macro-variables}.
    
    The model has been reduced using CLUE in~\cite[Section~4.2]{Ovchinnikov2021} with the observable being the total concentration of the free Syk (in all the four phosphorylation states). 
    The reduced model had $69$ macro-variables, and $51$ of them were nontrivial.
    It has been observed in~\cite[Section~4.2]{Ovchinnikov2021} that some of these macro-variable carry a physical interpretation, but in some of them, negative elements were present, which hinder their physical intelligibility.
    
    \subsubsection{Our results.}
    We apply our algorithm to the reduced model.
    Among the new macro-variables, we have $51$ nontrivial macro-variables which is the same as for the CLUE reduction.
    More precisely, \ref{step:producing} produced $57$ nontrivial macro-variables, and this number has been reduced to $51$ when computing a linearly independent basis on~\ref{step:greedy}.
    The resulting macro-variable refine the reduction by ERODE mentioned above in the sense that our new macro-variable are the sums of the Syk-macro-variables with non-negative coefficients.
    Therefore, in our reduction, all the complexes differing only by the phosphorylation state of the Syk units are in the same macro-variable.
    For the \emph{monomers}, we obtain the same reduction: nontrivial macro-variable involving monomers are of the form described on Figure~\ref{fig:fceri_monomer}. 
    
    \begin{figure}[!ht]
    \includegraphics[width=\textwidth]{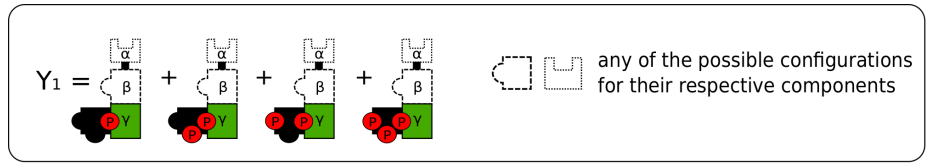}
    \centering
    \caption{The monomer macro-variables. In each of them, $\alpha$ and $\beta$ are fixed, and we sum over all the phosphorylation patterns of Syk.}
    \label{fig:fceri_monomer}
    \end{figure}	
    
    The macro-variable involving dimers are graphically described on Figure~\ref{fig:fceri_dimers}.
    First, one can see that they are indeed linear combinations of the Syk-macro-variables.
    Our interpretation of these new macro-variables is based on two observations about the set of the reactions in the original model~\cite[Figure~2]{Faeder2003}:
    \begin{description}
        \item[\textbf{(Obs. 1)}] For every reaction involving a dimer, only one of the receptors of the dimer is affected by the reaction.
        \item[\textbf{(Obs. 2)}] The $\gamma$-chain of the other (not affected) receptor is relevant only for the reactions of transphosphorylation of Syk.
    \end{description}
    Since the complexes with different phosphorylation patterns are grouped together in the Syk-macro-variables, the second observation implies that the transphosphorylation reactions do not affect the values of the Syk-macro-variables at all.
    Therefore, the first observation suggests considering macro-variables as sums over all the dimer configurations in which one receptor is fixed (up to the phosphorylation of Syk), and for the other receptor, all the possible variants of the $\gamma$-chain are considered.
    
    With this interpretation in mind, let us take a closer look at the Figure~\ref{fig:fceri_dimers}:
    \begin{itemize}
        \item Each of the variables as on Figure~\ref{fig:fceri_noSyk} is the sum over the configurations with the fixed left receptor not carrying Syk and the right receptor having each of the six possible $\gamma$-chains.
        If one of the complexes in the sum is fully symmetric, it appears with coefficient 2.
        \item Each of the variables as on Figure~\ref{fig:fceri_Syk} is a combination of complexes that have: the same $\beta$-chains and Syk on the left receptor, any phosphorylation pattern of the Syk on the left receptor, and any $\gamma$-chain on the right receptor.
        If the $\beta$-chains on the receptors are equal, the complexes with two Syk's (which are \emph{symmetric up to} Syk phosphorylation) appear with coefficient 2.
    \end{itemize}
    
    Note that the coefficients $2$ appearing in the presence of symmetry prevent ERODE~\cite{Cardelli2017} from finding this reduction.
    
    \begin{figure}[!htbp]
        
        \begin{subfigure}[b]{\textwidth}
        \centering
        \includegraphics[width=0.8\textwidth]{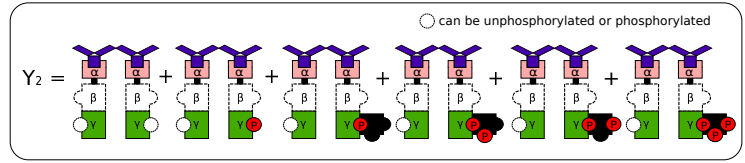}
         \caption{No Syk on the left $\gamma$-chain, the right takes all the possible $\gamma$-configurations}
         \label{fig:fceri_noSyk}
        \end{subfigure}    
        
        \begin{subfigure}[b]{\textwidth}
        \centering
        \includegraphics[width=0.8\textwidth]{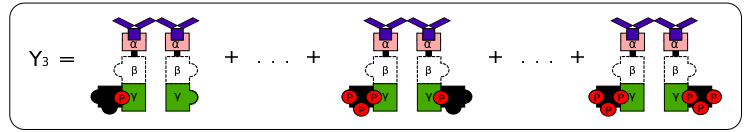}
    \caption{Syk (with all possible phosphorylations) on the left $\gamma$-chain, the right takes all the possible $\gamma$-configurations. }
    \label{fig:fceri_Syk}
        \end{subfigure}
        
        \begin{subfigure}[b]{\textwidth}
        \centering
        \includegraphics[width=0.8\textwidth]{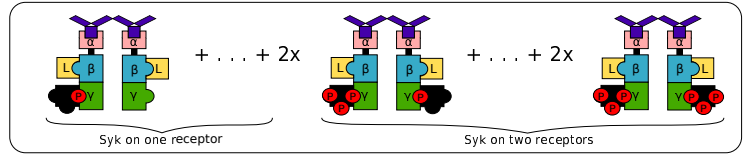}
    \caption{An example configuration of $Y_3$ (b) with identical $\beta$-chains. A variable has coefficient~2 when Syk kinases are recruited on both receptors (regardless of Syk state).}
    \label{fig:fceri_Syk2}
        \end{subfigure}
        \caption{Macro-variables involving dimers}
        \label{fig:fceri_dimers}
    \end{figure}


\subsection{Jak-family protein tyrosine kinase activation}
\label{subsec:barua}
\subsubsection{Setup.} 
We study a simplified cellular model of a bipolar ``clamp" mechanism for Jak-family kinase activation \cite{Barua2009}. 
Kinases of the Janus kinase (JAK) family play an essential role in signal transduction mediated by cell surface receptors, which lack innate enzymatic activities to dimerize. 
    
The model studies the interactions of Jak2 kinase trans-phosphorylation, specifically the rule-based dynamics between the Jak2 (J) kinase, the unique adaptor protein SH2-B$\beta$ (S) with the capacity to homo-dimerize, the growth hormone receptors (R), and a bivalent growth hormone ligand (L). The Jak2 kinase has two phosphorylation sites, Y1 and Y2. The SH2-B$\beta$ protein contains an N-terminal dimerization domain (DD) and a  C-terminal  Src homology-2  (SH2) domain. 

The components can interact in the following ways:
    \begin{enumerate}
        \item binding of ligand and growth hormone receptors which aggregates at the plasma membrane,
        \item constitutive binding of Jak2 kinase to the receptors, which autophosphorylates on the phosphorylation sites when two Jak2 kinases are bounded in the same complex,
        \item recruitment of SH2-B$\beta$ protein at the SH2 domain by the Jak2's autophosphorylated Y1 site,
        \item dimerization of SH2-B$\beta$ protein through recruitment of an additional SH2-B$\beta$ protein, engaged at the DD domains.
    \end{enumerate}

Receptors can undergo a process of \emph{internalization}, in which the receptors can no longer associate with any Jak2, and the existing Jak2 and SH2-B$\beta$ in the complex can dissociate at the normal rate. 
  \begin{figure}[!htbp]
     \centering
    \includegraphics[width=.8\textwidth]{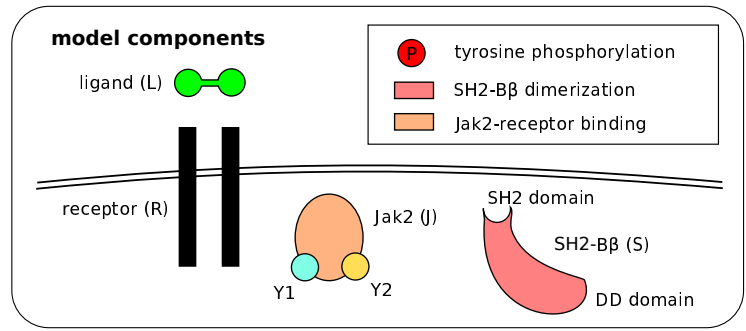}
    \caption{\small Components of the Jak-family protein tyrosine kinase activation model.\\}
    \label{barua_setup}
    \end{figure}

\subsubsection{Reductions by ERODE and CLUE.}
The reduction obtained by ERODE in~\cite[Figure~5]{Cardelli2017} contained $345$ macro-variables. 
It grouped the internalized configurations, which differ by the connections between the receptors and the ligand, into macro-variables.

The model has been reduced by CLUE in~\cite{Ovchinnikov2021}, with the observable being the concentration of the free ligand. 
The reduced model had $322$ macro-variables, and $69$ of them were nontrivial.
The model has been used in~\cite{Ovchinnikov2021} for benchmarking purposes only, so the macro-variables have not been interpreted.
The reduction included several macro-variables with negative coefficients, including one with 20 nonzero coefficients.
We do not see any natural interpretation for them.

\subsubsection{Our results.}  
We apply our algorithm to the reduced model, 
and among the produced macro-variables $69$ are nontrivial as in the reduction by CLUE (\ref{step:producing} produces $70$ macro-variables, and then this number is reduced to $69$ at~\ref{step:greedy}).
The nontrivial macro-variables are linear combinations of internalized molecules, and the trivial macro-variables are not internalized. 
Compared to the ERODE reduction, some internalized complexes such as the ligand-receptor (R, RL, RLR) structures in \cite[Figure~5(C)]{Cardelli2017}, are omitted in our model as they do not disassociate under internalization and thus do not affect the dynamics of the free ligand observable. 
In our reduction, mirrored internalized complexes are lumped together, which explains all two-element macro-variables.
The remaining nontrivial macro-variables are described on Figure~\ref{fig:barua}, and are of two types:
\begin{itemize}
    \item \emph{Configurations equivalent up to the connection between the ligand and the receptors (Figure~\ref{fig:barua}-a and~\ref{barua_fig2}).}
    The structures are equivalent under internalization as the ligand and receptors cannot disassociate and were obtained also by ERODE~\cite[Figure~5(D, E)]{Cardelli2017}. 
    They are of two types: single-Jak2-Receptor case (Figure~\ref{fig:barua}-a) and ``clamp'' case (Figure~\ref{barua_fig2}).

    \item \emph{Configurations with one receptor fixed (Figure~\ref{fig:barua}-b).}
    These are similar to Figure~\ref{fig:fceri_dimers} from the case study in Section~\ref{subsec:fceri}:
    since in the reactions with internalized complexes, only one receptor is affected, and this does not depend on the state of the other receptor, one can group together the complexes having one of the receptors the same.

    When the receptors are symmetric, the element and its mirrored element are the same, so the corresponding configuration appears with coefficient~2.
\end{itemize}

  \begin{figure}[!htbp]
    \centering
    \includegraphics[width=0.8\textwidth]{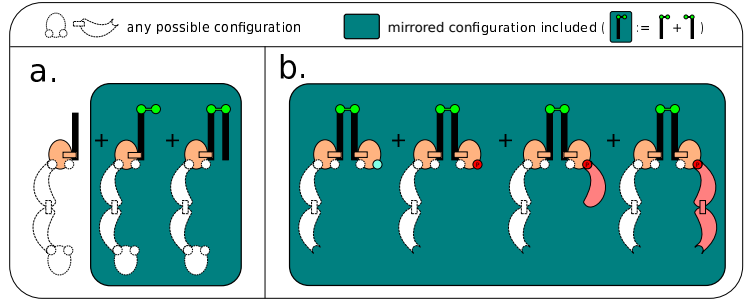}
    \caption{\small Classes of internalized macro-variables.
    a. equivalent up to the connection between the ligand and the receptors.
    b. one receptor fixed.}
    \label{fig:barua}
    \end{figure}
  \begin{figure}[!htbp]

    \includegraphics[width=0.9\textwidth]{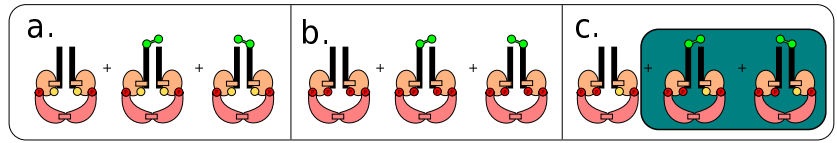}
    \caption{\small Macro-variables for equivalent structures of the bipolar "clamp" mechanism}
    \label{barua_fig2}
    \end{figure}


\section{Conclusion}

We have hypothesized that the interpretability of the macro-variables in an exact linear reduction may be improved by a change of coordinates making the macro-variable nonnegative combinations of the original variables and minimizing the number of nonzero coefficients.
We have designed and implemented an algorithm for performing such a transformation and applied it to three models (with hundreds of variables) for which the result of the reduction by CLUE~\cite{Ovchinnikov2021} contained macro-variables without a clear physical interpretation.
We have shown that the resulting macro-variables are interpretable, thus supporting the original hypothesis and demonstrating the usefulness of our algorithm.

Our results also give insight into the structure of reductions in which not all the coefficients are zeroes and ones.
In particular, we can point out two different situations:
\begin{itemize}
    \item The macro-variables are the ``concentrations'' of parts of molecules as in Section~\ref{subsec:case1}.
    The species having several identical pieces may appear with larger coefficients.
    \item Some of the molecules appearing in the macro-variable are symmetric (as in Section~\ref{subsec:barua}) or even partially symmetric (as in Section~\ref{subsec:fceri}), and they appear with a coefficient accounting for the symmetries.
\end{itemize}

\subsection*{Acknowledgement}

The authors are grateful to Fran\c{c}ois Fages, Mathieu Hemery, Sylvain Soliman, and Mirco Tribastone for helpful discussions and to the referees for helpful suggestions. 
GP is grateful to Heather Harrington, Gregory Henselman-Petrusek, and Zvi Rosen for educating him about the matroid theory.

%
%
%
\bibliographystyle{splncs04}
\bibliography{mybibliography}

\section*{Appendix: Non-positivizable reduction}
As we have mentioned in Remark~\ref{rem:NO}, we did not encounter examples from the literature for which Algorithm~\ref{alg:main} would return NO.
However, one can easily construct an artificial example with this property.
Consider the system
\begin{equation}
    \begin{cases}
      x_1' = x_1^2 + x_2^2,\\
      x_2' = 2x_1x_2.
    \end{cases}
\end{equation}
Then $y = x_1 - x_2$ yields a reduced system $y' = y^2$.
However, since any change of macro-variables is a scaling of $y$, there is no equivalent lumping with nonnegative coefficients, so Algorithm~\ref{alg:main} (with the input $L = (1\; -1)$) will return NO.

\end{document}